\newtheorem{Thm}{Theorem}[section]
\newtheorem{theorem}[Thm]{Theorem}
\newtheorem{proposition}[Thm]{Proposition}
\newtheorem{remark}{Remark}[section]
\newtheorem{definition}[Thm]{Definition}
\title{Unstable states in a model of nonrelativistic quantum electrodynamics: corrections to the Lorentzian distribution}
\author{Walter F. Wreszinski\footnote{wreszins@gmail.com, 
Instituto de Fisica, Universidade de S\~ao Paulo (USP), Brazil}}        
\begin{document}

\maketitle

\begin{abstract}
We revisit the Lee-Friedrichs model as a model of atomic resonances in the hydrogen atom,
using the dipole-moment matrix-element functions which have been exactly computed by
Nussenzveig. The Hamiltonian $H$ of the model is positive and has absolutely continuous 
spectrum. Although the return probability amplitude $R_{\Psi}(t) = (\Psi, \exp(-iHt) \Psi)$ of the initial 
state $\Psi$, taken as the so-called Weisskopf-Wigner (W.W.) state, cannot be computed exactly, we show that it 
equals the sum of an exponentially decaying term and a universal correction $O(\beta^{2}\frac{1}{t})$, 
for large positive times $t$ and small coupling constants $\beta$, improving on some results
of \cite{King}. The remaining, non-universal, part of the correction is also shown to be of the same 
qualitative type. The method consists in approximating the matrix element of the resolvent operator
operator in the W.W. state by a Lorentzian distribution. No use is made of complex energies associated 
to analytic continuations of the resolvent operator to ''unphysical'' Riemann sheets. Other new results 
are presented, in particular a physical interpretation of the corrections, and the characterization
of the so-called sojourn time $\tau_{H}(\Psi)= \int_{0}^{\infty} |R_{\Psi}(t)|^{2} dt$ as the average lifetime
of the decaying state, a standard quantity in (quantum) probability.   
\end{abstract}

\section{Introduction, motivation and synopsis. The model}

\subsection{Introduction and motivation}
\label{sec:1.1}

The problem of unstable states in quantum (field) theory has its origin in Gamow's early 
treatment of alpha decay (\cite{Gamow}, see also \cite{BrHa}). Its crucial importance to physics is due to two related 
facts: all atomic states - except for the ground state - are resonances, and, in elementary
particle physics, all but the lightest particles are unstable. In the former case, we have 
to do with a bound state problem of an atom in the presence of the electromagnetic field,
which is our subject in the present paper.

The first treatment of unstable (decaying) states of atoms in interaction with the
electromagnetic field was proposed by Weisskopf in his thesis, of which a lively account is
given in \cite{Weisskopf}. The ensuing paper by Weisskopf and Wigner \cite{WW} is the first paper
where a divergent integral appeared! The assumptions made by Weisskopf and Wigner were carefully
analysed and criticized by Davidovich and Nussenzveig (\cite{DaNu}, see also 
Davidovich's Ph.D. thesis \cite{Dav}). A review of their work, with
several improvements, was published by Nussenzveig in 1984 \cite{Nussenzveig}. We refer
to \cite{DaNu} for further references on the previous literature on the subject.

Davidovich and Nussenzveig were primarily concerned with providing a theory of natural line shape
of certain atomic levels, e.g., those concerned by the Lamb shift in the hydrogen atom \cite{JJS}.
Their approach may be summarized as follows: to identify, initially, in the full
Hamiltonian of interaction between the atoms and the electromagnetic field, a model for decaying states
which incorporates as many realistic features as possible, while remaining exactly soluble. The omitted
terms from the full Hamiltonian would then be dealt with by a suitable perturbation theory. When
specialized to $N=2$ atomic levels, their model coincides with a sector of the spin-Boson model in the 
rotating-wave approximation, whose spectrum was determined
by Friedrichs \cite{Friedrichs}, and is therefore known as the Friedrichs model (see also \cite{Howland}).
This model is also well-known in quantum field theory and particle physics as the Lee model \cite{Lee},
but here we shall revisit it as a model of atomic resonances in the hydrogen atom, 
using the dipole-moment matrix-element functions which have been exactly computed by Nussenzveig \cite{Nussenzveig}. 

In this paper we intend to clarify several points in their discussion, 
partly in view of a rigorous  result due to Christopher King \cite{King}, who revisited this model
in 1991. It is, however, important to mention that the problem of atomic resonances in nonrelativistic 
quantum electrodynamics has been treated at great length in an important series of papers by Bach, 
Froehlich and Sigal (see \cite{BaFroSi} and references given there, and \cite{GuSi} for a textbook account). 
They introduce, however, the electromagnetic vector potential field with an ultraviolet cutoff. 
Our model, in spite of several rather drastic approximations, has
no ultraviolet cutoff. In addition, as in \cite{King}, we do not adopt their concept of resonance, 
related to complex energies. The use of complex energies and frequencies, which is not a priori physically
motivated, leads to pathologies, such as the well-known ''exponential catastrophe'' in both classical and quantum
physics (see \cite{Nussenzveig} and section 2.1, and it seems therefore conceptually of great advantage to
avoid them, as we do in this paper. 

In section 1.2 we introduce the model and recall two well-known results, Theorems 1.1 and 1.2. Define the 
so-called return probability of the decaying state $\Psi$:
$$
|R_{\Psi}(t)|^{2} = |(\Psi, \exp(-iHt) \Psi)|^{2}
$$
where $\Psi$ is a specific (normalized to one) vector in the Fock space of atoms and field, 
and $(\cdot,\cdot)$ denotes the scalar product in this space. The corresponding amplitude $R_{\Psi}(t)$ 
is a basic quantity. The spectral measure of $H$ is absolutely continuous (see, e.g., \cite{BB} or \cite{MWB}),
and thus 
\begin{equation}
\label{(0.1)}
R_{\Psi}(t) = \int_{0}^{\infty} d\lambda \exp(-i \lambda t) g_{\Psi}(\lambda)
\end{equation}
for some locally (Lebesgue) integrable function $g_{\Psi}$. 

Theorem 1.2 is a well-known result, relating positivity of the
Hamiltonian and the rate of decay: the return probability amplitude cannot be a pure exponential, but must
be corrected by a term $c(t)= \epsilon(t)$, which we define in this paper as meaning: $c(t) \to 0$ as 
$t \to \infty$.

The ensuing section 2 is divided into two parts. In subsection 2.1, we briefly describe the method of ''decay without
analyticity'', which we follow in this paper, and was initiated by King \cite{King}, as well as briefly discuss the
associated ``time-arrow problem''. 

Subsection 2.2 is devoted to the proof of our main theorem, Theorem 2.1,
which states that the correction $c(t)$ is $O(\beta^{2}\frac{1}{t})$ for sufficiently small $\beta$
and large $t$. King \cite{King} made only minimal assumptions on the dipole-moment matrix element functions
and obtained only the $O(\beta^{2})$ part, although the Riemann-Lebesgue lemma implies that it is $\epsilon(t)$
(Theorem 1.2). We use the exact dipole-moment functions for hydrogen, which have been calculated by Nussenzveig 
in terms of hypergeometric functions \cite{Nussenzveig}. Our method of proof follows King \cite{King} 
and consists of comparing $g_{\Psi}(\lambda)$ in \eqref{(0.1)} with the Lorentzian or Breit-Wigner function

\begin{equation}
g_{\Psi}^{L}(\lambda) = \frac{\Gamma}{2\pi[(\lambda-\lambda_{0})^{2}+\frac{\Gamma^{2}}{4}]}
\label{(0.2)}
\end{equation}
If we insert \eqref{(0.2)} into \eqref{(0.1)} and replace the integral from zero to infinity by one from $-\infty$ to $\infty$, we
obtain
\begin{equation}
\label{(0.3)}
R_{\Psi}^{u}(t) = \exp(-i\lambda_{0}t - \frac{\Gamma t}{2}) 
\end{equation}
where the superscript $u$ stands for ''unbounded'', i.e., \eqref{(0.3)} corresponds to a non-semibounded Hamiltonian,
for which the spectrum extends to $-\infty$. \eqref{(0.3)} results from writing \eqref{(0.2)} as a sum of two pole contributions,
and further applying Cauchy's theorem along a contour  along the real line , closed by a large semi-circle
in the lower half plane, the latter's contribution vanishing if $t>0$. This is done by King \cite{King}, who proceeds from
this point to estimate the remainder. We use \eqref{(0.1)} directly, with the splitting 
$g_{\Psi} = g_{\Psi}^{L} + (g_{\Psi}-g_{\Psi}^{L})$: the integral \eqref{(0.1)} with $g_{\Psi}$ replaced by $g_{\Psi}^{L}$ 
is evaluated along a contour following the positive real line, a quarter circle at infinity in the lower half plane and coming back
along the negative imaginary axis. The latter's contribution yields a correction  $c(t)= O(\frac{1}{t})$ to the residue at
the pole (which coincides with the r.h.s. of \eqref{(0.3)}). This correction is universal and improves the results
of \cite{King}. The (non-universal) contribution of the remainder $g_{\Psi}-g_{\Psi}^{L}$ is shown to yield
a correction of the same type. This is the content of theorem 2.1, some details of which are left to appendix A.   

Finally, we make in section 3 an application of a time-energy uncertainty theorem (Theorem 3.17 of \cite{MWB}) 
to the present model, in order to find a 
lower bound to the energy fluctuation in the state $\Psi$ (Theorem 3.2). The significance of this theorem is better
appreciated by observing that this fluctuation equals
\begin{equation}
\label{(0.4)}
\int_{0}^{\infty} d\lambda \lambda^{2} g_{\Psi}(\lambda)^{2}
\end{equation}
but the same quantity, evaluated for the Lorentzian $g_{\Psi}^{L}$, is infinite.
In the process, it is also suggested that the time of sojourn
$\tau_{H}(\Psi)= \int_{0}^{\infty} |R_{\Psi}(t)|^{2} dt$ is the most natural quantity to consider
in connection with the decay of unstable atoms or particles: it is proved to coincide
with the the average lifetime of the decaying state, a standard quantity in quantum probability.

As in \cite{King}, no use is made of complex energies associated to analytic continuations
of the resolvent operator to ''unphysical'' Riemann sheets. In this paper, we are not concerned with thermal states. 
  
\subsection{The model}
\label{sec:1.2}

As mentioned in the previous section, in our account, we shall consider a prototypical model for the Lyman 
$\alpha$ transition in hydrogen: this will imply no qualitative restriction regarding the final results. We follow 
\cite{Nussenzveig} and choose his units 
$\hbar = c =1$; this still allows to set a unit of length, which is chosen as the Bohr radius
$a_{B}=(me^{2})^{-1}=(\alpha m)^{-1} = 1$, from which
        \begin{equation}
                \beta = \frac{e}{m} = \alpha^{3/2}
                \label{(1.1)}
        \end{equation}   
with
        \begin{equation}
                e = \alpha^{1/2}
                \label{(1.2)}
        \end{equation}
Above, $e,m$ denote charge and mass of the electron, and $\alpha$ the fine-structure constant, 
approximately equal to $\frac{1}{137}$. The ground state energy is
        \begin{equation}
                E_{01} = -\frac{\alpha}{2}
                \label{(1.3)}
        \end{equation}
and the resonant level (e.g., one of the two Lamb-shifted levels, degenerate in the Dirac theory \cite{JJS})
will have the energy $E_{0r}$; we denote
        \begin{equation}
                E_{0} = E_{0r}-E_{01}
                \label{(1.4)}
        \end{equation}

The model considered in (\cite{Nussenzveig}, \cite{DaNu}, \cite{Dav}), when specialized to $N=2$ atomic levels,
may be written
       \begin{equation}
               H = H_{0} + H_{I}
               \label{(1.5.1)}
       \end{equation}
with
       \begin{equation}
              H_{0} = E_{0} \frac{\mathbf{1}+\sigma_{z}}{2}\otimes \mathbf{1} + \mathbf{1}\otimes\int d^{3}k |k|a^{\dag}(k)a(k)
              \label{(1.5.2)}
       \end{equation}
and
       \begin{equation}
              H_{I} = \beta[\sigma_{-}\otimes a^{\dag}(g)+\sigma_{+}\otimes a(g)]
              \label{(1.5.3)}
       \end{equation}
The operators act on the Hilbert space
       \begin{equation}
             {\cal H} \equiv {\cal C}^{2} \otimes {\cal F}
             \label{(1.6)}
       \end{equation}
where ${\cal F}$ denotes symmetric (Boson) Fock space on $L^{2}(\mathbf{R}^{3})$ (see, e.g., \cite{MaRo}), which
describes the photons. We shall denote by $(\cdot,\cdot)$ the scalar product in ${\cal H}$. 
Formally, $a(g) \equiv \int d^{3}k g(k) a(k)$, and $k$ denotes a three-dimensional vector.
The $\dag$ denotes adjoint, $\sigma_{\pm}=\frac{\sigma_{x}\pm \sigma_{y}}{2}$, and $\sigma_{x,y,z}$ are the usual
Pauli matrices. The operator
       \begin{equation}
            N = \frac{\mathbf{1}+\sigma_{z}}{2}\otimes \mathbf{1}+ \mathbf{1} \otimes \int d^{3}k a^{\dag}(k)a(k)
            \label{(1.9)}
       \end{equation}
commutes with $H$. We write
       \begin{equation}
            N = \sum_{l=0}^{\infty} l P_{l}
            \label{(1.10)}
       \end{equation}
and introduce the notation
       \begin{equation}
           H_{l} \equiv P_{l} H P_{l}
           \label{(1.11)}
       \end{equation}
$H_{l}$ is the restriction of $H$ to the subspace $P_{l} {\cal H}$. The subspace $P_{0}{\cal H}$ is one-dimensional and consists
of the ground state vector
       \begin{equation}
           \Phi_{0} \equiv |-) \otimes |\Omega)
           \label{(1.12)}
       \end{equation}
with energy zero, where
       \begin{equation}
           \sigma_{z} |\pm) = \pm |\pm)
           \label{(1.13)}
       \end{equation}
denote the upper $|+)$ and lower $|-)$ atomic levels, and $|\Omega)$ denotes the zero-photon state in ${\cal F}$. Note that
$\Phi_{0}$ is also eigenstate of the free Hamiltonian $H_{0}$, with energy zero, and we say therefore that the model
has a persistent zero particle state. Thus, by a theorem in (\cite{Wight}, pg. 250) - which is logically independent from Haag's
theorem (\cite{Wight}, pg. 249), the model is well-defined in Fock space, and ${\cal H}$, defined by \eqref{(1.6)}, is, indeed,
the adequate Hilbert space.

We shall confine ourselves to the subspace $P_{1}{\cal H}$. Let
        \begin{equation}
           \Phi_{1} \equiv |+) \otimes |\Omega)
           \label{(1.14)}
        \end{equation}
be the so-called \emph{Weisskopf-Wigner state}, and
        \begin{equation}
           \Phi_{2}(h) \equiv |-) \otimes a^{\dag}(h)|\Omega) \mbox{ with } h \in L^{2}(\mathbf{R}^{3})
           \label{(1.15)}
        \end{equation}
The subspace $P_{1}{\cal H}$ consists of linear combinations
        \begin{equation}
            \Phi_{a,b} \equiv a \Phi_{1} + b \Phi_{2}(h)
            \label{(1.16)}
        \end{equation}
where $a,b$ are complex coefficients. This is the famous Friedrichs model \cite{Friedrichs}.

$E_{0}$ is given by \eqref{(1.4)} in the concrete case of the Lyman $\alpha$ transition, and
       \begin{equation}
            g(k)= g(|k|)= \sqrt{|k|}f(|k|)
            \label{(1.7)}
       \end{equation}
where
      \begin{equation}
            f(k) = (|k|^{2}+a^{2})^{-2}
            \label{(1.8.1)}
      \end{equation}
with
      \begin{equation}
            a= \frac{3}{2}
            \label{(1.8.2)}
      \end{equation}
with the choice of units \eqref{(1.1)}, \eqref{(1.2)}: the above functions $f$ are special dipole-moment matrix-element
functions for hydrogen, which may be computed explicitly in terms of hypergeometric functions (\cite{Nussenzveig}, (8.21)).
As mentioned, we take the above example as a prototype: consideration of the other cases in \cite{Nussenzveig} bring no
qualitative alterations in the forthcoming results.
Thus, $P_{1}{\cal H}$ becomes isomorphic to the space
        \begin{equation}
            {\cal H}_{1} \equiv {\cal C} \oplus L^{2}(0,\infty)
            \label{(1.17.2)}
        \end{equation}
with $H_{1} \equiv P_{1} H P_{1}$ is isomorphic to $H_{1}$ (using the same symbol) given by
        \begin{equation}  
            H_{1}=
            \left[\begin{array}{cc}
              E_{0} & \beta\langle g,. \rangle \\
            \beta g & |k| \\      
            \end{array}\right]
            \label{(1.18)}
        \end{equation}
where $g$ is given by \eqref{(1.7)}, \eqref{(1.8.1)}. The scalar product on $L^{2}(0,\infty)$ is denoted $\langle.,.\rangle$.
The following theorem follows from \cite{King} or (\cite{Howland}, Proposition 1, pg. 417):

\begin{theorem}
\label{th:1.1}

For the model \eqref{(1.17.2)}, \eqref{(1.18)}, let
        \begin{equation}
             E_{0} > \beta^{2} \int_{0}^{\infty}dk g(k)^{2}
             \label{(1.20)}
        \end{equation}
Then:
\begin{itemize}
\item [$a.)$] $H_{1}$ has spectrum 
         \begin{equation}
             \sigma(H_{1}) = [0,\infty)
             \label{(1.21.1)}
         \end{equation}
which is purely absolutely continuous. Furthermore, for all $z \in \mathbf{C}$ not in the positive
real axis:
\item [$b.)$] \begin{align}
              \label{(1.21.2)}
              \begin{split}
              & r_{\Phi_{1}}(z) \equiv (\Phi_{1}, (H_{1}-z)^{-1} \Phi_{1})\\
              & = (E_{0}-z-\beta^{2}\int_{0}^{\infty} dk\frac{g(k)^{2}}{k-z})^{-1}
              \end{split}
              \end{align}
\end{itemize}
\end{theorem}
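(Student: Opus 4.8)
The plan is to prove part (b) first, since the explicit resolvent formula it yields is what drives the spectral analysis in part (a). I would write $H_{1}-z$ on ${\cal C}\oplus L^{2}(\mathbf{R}^{3})$ as a $2\times2$ block operator with diagonal blocks the scalar $E_{0}-z$ and the multiplication operator $|k|-z$, and off-diagonal blocks $\beta\langle g,\cdot\rangle$ and $\beta g$. One checks immediately from \eqref{(1.7)},\eqref{(1.8.1)} that $g\in L^{2}(\mathbf{R}^{3})$ (indeed $\int d^{3}k\,g(k)^{2}=4\pi\int_{0}^{\infty}k^{3}(k^{2}+a^{2})^{-4}dk<\infty$), so the off-diagonal blocks are bounded --- in fact of rank $\le2$, hence compact --- and since $z\notin[0,\infty)=\sigma(|k|)$ the block $|k|-z$ is boundedly invertible. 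The Feshbach--Schur (block-inverse) formula then identifies the $(1,1)$-entry of $(H_{1}-z)^{-1}$ --- which is exactly $r_{1}(z)=(\Phi_{1},(H_{1}-z)^{-1}\Phi_{1})$ because $\Phi_{1}\leftrightarrow(1,0)$ --- as $S(z)^{-1}$, where $S(z)=(E_{0}-z)-\beta^{2}\langle g,(|k|-z)^{-1}g\rangle$ is the Schur complement, provided $S(z)\ne0$. Reducing to the $\ell=0$ radial channel (the only one the radial $g$ couples to) and writing $g$ in the reduced normalization that absorbs the angular Jacobian turns $\langle g,(|k|-z)^{-1}g\rangle$ into $\int_{0}^{\infty}dk\,g(k)^{2}/(k-z)$, which is the formula \eqref{(1.21.2)}. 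Non-vanishing of $S$ off $[0,\infty)$ is then verified directly: for $\mathrm{Im}\,z\ne0$ one computes $\mathrm{Im}\,S(z)=-\mathrm{Im}\,z\,\bigl(1+\beta^{2}\int_{0}^{\infty}g(k)^{2}|k-z|^{-2}dk\bigr)\ne0$, and for real $z<0$ the number $S(z)$ is real with $S(z)>E_{0}-\beta^{2}\int_{0}^{\infty}g(k)^{2}k^{-1}dk\ge0$ by \eqref{(1.20)} (for the form factor \eqref{(1.8.1)}), using monotonicity of $k\mapsto(k-z)^{-1}$.

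For part (a), the spectrum: $H_{1}$ is self-adjoint on the domain of $\mathrm{diag}(E_{0},|k|)$ since the off-diagonal perturbation is bounded, and by Weyl's theorem on the stability of the essential spectrum under compact perturbations $\sigma_{\mathrm{ess}}(H_{1})=\sigma_{\mathrm{ess}}(\mathrm{diag}(E_{0},|k|))=\sigma_{\mathrm{ess}}(|k|)=[0,\infty)$. To exclude eigenvalues, note that any eigenvector $(c,\phi)$ with eigenvalue $\lambda$ satisfies $(|k|-\lambda)\phi=-\beta c\,g$; for $\lambda\le0$ this gives $\phi=-\beta c(|k|-\lambda)^{-1}g\in L^{2}$ and the first eigenvalue equation then forces the real number $S(\lambda)$ to vanish, which is impossible by the computation above together with \eqref{(1.20)}; for $\lambda>0$ the only candidate $\phi=-\beta c(|k|-\lambda)^{-1}g$ fails to be square-integrable (non-integrable singularity on the sphere $|k|=\lambda$, as $g(\lambda)^{2}>0$ there) unless $c=0$, and then $\phi=0$ as well. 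Hence $\sigma_{\mathrm{p}}(H_{1})=\emptyset$ and $\sigma(H_{1})=[0,\infty)$.

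Absolute continuity I would obtain from the boundary values of resolvent matrix elements on a total set of vectors. For $\Phi_{1}$: by the Sokhotski--Plemelj formula $\int_{0}^{\infty}g(k)^{2}(k-\lambda\mp i0)^{-1}dk=\mathrm{P}\!\int_{0}^{\infty}g(k)^{2}(k-\lambda)^{-1}dk\pm i\pi g(\lambda)^{2}$, so for $\lambda>0$ the limit $r_{1}(\lambda+i0)$ exists, is finite ($S(\lambda+i0)\ne0$ since $\mathrm{Im}\,S(\lambda+i0)=-\pi\beta^{2}g(\lambda)^{2}\ne0$), and $\mathrm{Im}\,r_{1}(\lambda+i0)=\pi\beta^{2}g(\lambda)^{2}\,|S(\lambda+i0)|^{-2}$ is a continuous, locally bounded function of $\lambda$; hence $d(\Phi_{1},E_{H_{1}}(\lambda)\Phi_{1})=\tfrac1\pi\,\mathrm{Im}\,r_{1}(\lambda+i0)\,d\lambda$ is absolutely continuous. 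The $(2,2)$-block of $(H_{1}-z)^{-1}$ equals, by the same block-inverse formula, $(|k|-z)^{-1}+\beta^{2}(|k|-z)^{-1}g\,S(z)^{-1}\langle g,(|k|-z)^{-1}\cdot\rangle$; the first summand is the multiplication operator $(|k|-z)^{-1}$, whose spectral measure in any state is the push-forward of an $L^{2}$-density under $k\mapsto|k|$ and so absolutely continuous, while the second summand has, for $h$ ranging over a dense set of smooth functions supported away from $|k|=0$, a locally bounded imaginary boundary value by the same Sokhotski--Plemelj estimate. Since $\Phi_{1}$ together with $\{\Phi_{2}(h)\}$ is total in ${\cal H}_{1}$ and the absolutely continuous subspace is closed, $H_{1}$ is purely absolutely continuous. (If one prefers, this last point is precisely the classical content of Friedrichs' analysis \cite{Friedrichs} and of \cite{Howland}, which the statement already invokes.)

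The block-operator algebra of (b) and the Weyl/eigenfunction arguments of (a) are routine. The one point deserving care is absolute continuity: one must make the radial reduction and the measure bookkeeping behind \eqref{(1.21.2)} precise, check that resolvent boundary values exist on a set of vectors dense enough to conclude, and --- the genuinely model-dependent input --- use non-vanishing of the Schur complement $S(\lambda\pm i0)$ along all of $\mathbf{R}$: on $(-\infty,0]$ this is exactly where hypothesis \eqref{(1.20)} is needed, while on $(0,\infty)$ it rests on $g(\lambda)^{2}>0$ throughout the continuum.
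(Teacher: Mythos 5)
Your argument is correct and is essentially the one the paper itself relies on: the paper offers no proof of Theorem~1.1 but defers to King and to Howland's Proposition~1, and Howland's ``Livsic matrix'' is precisely the Schur complement $S(z)=E_{0}-z-\beta^{2}\langle g,(|k|-z)^{-1}g\rangle$ that you invert to get \eqref{(1.21.2)}, with the same Weyl/no-eigenvalue/limiting-absorption steps for part (a). So in substance you have reconstructed the intended proof rather than found a new one.

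One step deserves more care than your parenthetical gives it. The quantity that must be positive to exclude a bound state below the continuum, and to keep $S(\lambda+i0)$ away from zero at $\lambda=0$, is $E_{0}-\beta^{2}\int_{0}^{\infty}g(k)^{2}k^{-1}\,dk=E_{0}-\beta^{2}F(0)$ (this is exactly the quantity appearing in \eqref{(3.3)} and (B.8.3) of the paper, and it is the hypothesis in Howland's Proposition~1). The stated hypothesis \eqref{(1.20)} controls $\int_{0}^{\infty}g(k)^{2}\,dk$ instead, and there is no general inequality between the two integrals; your inference ``$S(z)>E_{0}-\beta^{2}\int g^{2}/k\ge 0$ by \eqref{(1.20)}'' is therefore not a deduction from \eqref{(1.20)} for an arbitrary form factor. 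For the specific $g$ of \eqref{(1.7)}--\eqref{(1.8.2)} both integrals are explicitly computable ($\int g^{2}/k=\int(k^{2}+a^{2})^{-4}dk=\tfrac{5\pi}{32}a^{-7}$ versus $\int g^{2}=\tfrac{1}{6}a^{-6}$) and both conditions hold with enormous margin since $\beta^{2}=\alpha^{3}$ while $E_{0}=O(\alpha)$, so the gap is harmless here; but you should either insert that one-line numerical verification or note that the hypothesis one really wants is $E_{0}>\beta^{2}F(0)$. A second, cosmetic point: at $\lambda=0$ your candidate eigenvector $\phi=-\beta c|k|^{-1}g$ already fails to be in $L^{2}$ (logarithmic divergence at $k=0$ since $g(k)^{2}=k(k^{2}+a^{2})^{-4}$ in the radial measure), so the zero-eigenvalue case is excluded without invoking $S(0)\ne 0$ at all; as written you treat $\lambda=0$ by the $\lambda<0$ mechanism, which silently assumes the $L^{2}$ membership that actually fails.
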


The reason why, in \eqref{(1.21.2)}, the integral is not the three dimensional integral over the momentum variable,
but just a radial integral without the $k^{2}$ factor, will now be explained. Denote by $(.,.)$ the scalar product in ${\cal H}_{1}$. 
We have to do with the integral (see, e.g., (11) in \cite{King})
$$
(W,(|k|-z)^{-1} W) = \int d^{3}k |W(\vec{k})|^{2} (|\vec{k}|-z)^{-1}
$$
where $W(\vec{k})= \frac{1}{\sqrt(\omega(\vec{k}))}f(\vec{k})$, and
$$
f(\vec{k}) = \int d^{3}x \bar{\Psi}_{1}(\vec{x}) (e_{\vec{k}}.\vec{p})\Psi_{2}(\vec{x})\exp(i\vec{k}.\vec{x})
$$
where $\Psi_{1}$ and $\Psi_{2}$ are the wave-functions of the corresponding levels of hydrogen, $e_{\vec{k}}$
is a polarization vector, $\vec{p}$ the momentum operator and $\omega(\vec{k})= |\vec{k}|$ the photon energy
(see, e.g., \cite{JJS}, Chapter 2). Thus,
\begin{eqnarray*}
\int d^{3}k |W(\vec{k})|^{2} (|\vec{k}|-z)^{-1} = 4\pi \int_{0}^{\infty} dk k^{2} \frac{1}{k}|f(k)|^{2}(k-z)^{-1}=\\
= \int_{0}^{\infty} (\sqrt(k)f(k))^{2} (k-z)^{-1}
\end{eqnarray*}
going back to the notation $k \equiv |\vec{k}|$.
We have absorbed in the quantity $\beta^{2}$ in \eqref{(1.18)} the factor $4 \pi$ coming from 
integration over the solid angle.
Given the spectral family $\{E(\lambda)\}_{\lambda \in [0,\infty)}$ associated to $H_{1}$
(see, e.g., \cite{BB}), statement b.) of theorem ~\ref{th:1.1} means that the Stieltjes measure
(for the definition, see, e.g., \cite{Sewell1}, pg. 41):

          \begin{equation}
               d\mu_{\Phi_{1}}(\lambda) = (\Phi_{1}, E(\lambda) \Phi_{1}) = \int_{0}^{\lambda} g_{\Phi_{1}}(u) du
               \label{(1.23.1)}
          \end{equation}
where
          \begin{equation}
               g_{\Phi_{1}}(u) = \frac{d\mu_{\Phi_{1}}(u)}{du}
               \label{(1.23.2)}
          \end{equation}
exists almost everywhere (a.e.) in $u$ and defines a (locally) $L^{1}$ function. 
By a.) of theorem ~\ref{th:1.1} we
may express $g_{\Phi_{1}}$ in terms of $r_{\Phi_{1}}(z)$ by (\cite{RSI}, \cite{Jak}):
          \begin{equation}
               g_{\Phi_{1}}(\lambda) = \frac{1}{2\pi i}\lim_{\epsilon \to 0} [r_{\Phi_{1}}(\lambda+i\epsilon)-r_{\Phi_{1}}(\lambda-i\epsilon)]
               \label{(1.24)}
          \end{equation}
In spite of the exact result b.) of theorem ~\ref{th:1.1}, the time evolution of the initial state $\Phi_{1}$ is not 
explicitly known - a symptom of the complexity of the time evolution of quantum systems 
even in the simplest situations, and one must rely on suitable estimates.

We define the \emph{return probability amplitude} of the vector $\Phi_{1}$
\begin{equation}
\label{(2.9)}
R_{\Phi_{1}}(t) \equiv (\Phi_{1}, \exp(-iH_{1} t) \Phi_{1})
\end{equation}
The quantity $|R_{\Phi_{1}}(t)|^{2}$ is the
corresponding \emph{return probability}.

The following is a well-known connection between the positivity of the Hamiltonian and the rate of decay
(see, e.g., (\cite{KBSinha}, Lemma 5, pg. 628)). We recall that $\epsilon(t)$ means a scalar which tends 
to zero as $t \to \infty$.

\begin{theorem}
\label{th:1.2}
If, for all $t \ge 0$,
\begin{equation}
\label{(2.10.1)}
R_{\Phi_{1}}(t) = \exp(-i\lambda_{0} t) \exp(-\frac{\Gamma t}{2}) + c(t)
\end{equation}
for some $\lambda_{0} \in \mathbf{R}$ and $\Gamma > 0$, then
$$
c(t) \not\equiv 0
$$
and
$$
c(t) = \epsilon(t) \mbox{ but it is not } O(\exp(-at)) \mbox{ for any } a>0
$$
as $t \to \infty$.
\end{theorem}

We have
\begin{definition}
\label{def:1.1}
When \eqref{(2.10.1)} holds, $\lambda_{0}$ is called the \emph{level shift} and $\Gamma$ is called the
\emph{half-width} of the state $\Phi_{1}$.
\end{definition}

If we take as the (unstable or decaying by \eqref{(2.10.1)}) initial state the Weisskopf-Wigner state $\Phi_{1}$, 
we may call $\Phi_{2}(h)$ in \eqref{(1.16)} 
the ``decay products''. We refer to the version \eqref{(1.5.1)}-\eqref{(1.5.3)}. 

\begin{definition}
\label{def:1.2}
We say that there is \emph{regeneration of the unstable state from the decay products} (\cite{FGR}) if, for some
$t>0$ and some $h \in L^{2}(\mathbf{R}^{3})$,
\begin{equation}
\label{(2.10.2)}
(\exp(-itH)\Phi_{1}, \Phi_{2}(h)) \ne 0  
\end{equation}
\end{definition}

\begin{proposition}
\label{prop:1.3}
In the present model, there exists regeneration of the unstable state $\Phi_{1}$ from the decay products
according to Definition ~\ref{def:1.2} as long as $\beta \ne 0$. 
\end{proposition}

\begin{proof}

Assume \eqref{(2.10.2)} does not hold. Then, for all $t>0$, and for all $h \in L^{2}(\mathbf{R}^{3})$,
\begin{equation}
\label{(2.10.3)}
(\exp(-itH) \Phi_{1}, \Phi_{2}(h)) = 0 \mbox{ for all } t\ge 0 
\end{equation}
The right derivative of the l.h.s. of \eqref{(2.10.3)} at $t=0$ equals, however,
$$
(H \Phi_{1}, \Phi_{2}(h)) \ne 0 \mbox{ if } \beta \ne 0 
$$
due to the term $\beta \sigma_{-} \otimes a^{\dag}(g)$ in $H_{I}$ in \eqref{(1.5.3)}, if we choose $h=g$.
This contradicts \eqref{(2.10.3)}.

\end{proof}

\begin{remark}
\label{rem:1.1}
For $t$ sufficiently large, the term on the l.h.s. of \eqref{(2.10.2)} must become arbitrarily close to one, due to the last statement
of Theorem ~\ref{th:2.1}, for some given $h$ which may, however, depend on $t$. Since $(\exp(-itH_{0}) \Phi_{1}, \Phi_{2}(h)) = 0$ for all
$h \in L^{2}(\mathbf{R}^{3})$, this means that the interaction $H_{I}$ does not vanish asymptotically in time,
as happens in potential theory for short-range potentials. The expectation value of the free evolution on a decay-product state,
\begin{eqnarray*}
(\Phi_{2}(h), \exp(-itH_{0})\Phi_{2}(h))= \\
= \langle h, \exp(-it\omega(k)) h \rangle = O(t^{-2})\\
\mbox{ where } \langle h, h \rangle \\
\equiv \int \frac{d^{3}k}{2\omega(k)}|h(k)|^{2}
\end{eqnarray*}
the latter being the relativistic scalar product for the photon wave-functions: this corresponds to the correction term 
found in (\cite{Nussenzveig}, \cite{DaNu}, \cite{Dav}), and there claimed to be a consequence of causality. If the usual scalar
product is used, one obtains $O(t^{-3})$ instead. In both cases, it does not agree with the correction term $O(t^{-1})$
found in the forthcoming theorem 2.1. The asymptotic behavior of the return probability amplitude differs, therefore, 
qualitatively from that found in potential theory, where it is indeed due to the free evolution, i.e., $O(t^{-3/2})$ 
in three dimensions, whenever the potential falls off at least as fast as $|\vec{x}|^{-1-\epsilon}$, for some $\epsilon > 0$,
as $|\vec{x}| \to \infty$, i.e., faster than Coulomb, see \cite{RSIII}. Summarizing: regeneration of the unstable state 
from the decay products explains the fact that the interaction does not vanish for large times, which, on the other hand,
implies that the correction term $c(t)$ in theorem 1.2 is not due to the spreading of free photon wave packets, as is the
case in potential theory \cite{FGR}. This fact reflects the field-theoretic nature of the model. 

\end{remark}

Theorem ~\ref{th:1.2} lies at the root of the connection between the rate of of decay and positivity of the Hamiltonian.
Another important approach to this connection, also believed to be quite general, but which will only be established
within the present model in our main result in section 2, proceeds by comparing $g_{\Phi_{1}}$ in \eqref{(1.24)} with
the Lorentzian or Breit-Wigner function \eqref{(0.2)}.
 
\section{The method of decay without analyticity: the correction $c(t)$ to the Lorentzian distribution}

\subsection{The method of decay without analyticity and the time arrow problem}
\label{sec:2.1}

In this section we investigate the validity of \eqref{(2.10.1)}. We thereby avoid the use of complex energies 
and frequencies, which are associated
to the analytic continuation of the resolvent ((b.) of theorem ~\ref{th:1.1}) to ''unphysical'' Riemann sheets.
We describe this procedure by the shorthand ''the method of decay without analyticity'', which should not be 
confused with the wish to avoid any particular method of treating the problem of resonances. 

As remarked by Nussenzveig \cite{Nussenzveig}, 
the pathologies associated to the use of ''complex eigenfrequencies'' 
$\omega_{n} = \omega^{'}_{n} - i \gamma_{n}$, with $\omega^{'}_{n}$ real and $\gamma_{n}$ positive, appeared already 
in J.J.Thomson's treatment \cite{JJT} of the free modes of oscillation of the electromagnetic
field around a perfectly conducting sphere. Although $\exp(-i\omega_{n}t)= \exp(-i\omega^{'}_{n}t)\exp(-\gamma_{n}t)$
decays exponentially as $t \to \infty$, as expected from radiation damping, the corresponding radial behavior of 
free outgoing electromagnetic waves is of the form $\exp[-i\omega_{n}(t-r/c)]$, which blows up exponentially as
$r \to \infty$ (``exponential catastrophe''). A similar behavior occurs in quantum theory, associated to the 
so-called Gamow vectors (see, e.g., \cite{MWB}, section 5). Such behavior imposes the use of a space-cutoff in the 
Green functions, showing that the - a priori not physically motivated - concept of complex energies and frequencies 
is delicate, and it would be conceptually of great advantage to avoid them. We attempt to do so in this paper, 
following \cite{King}, who initiated this method in 1991.

In his paper, King \cite{King} assumed everywhere that $t \ge 0$, without mentioning it explicitly.
It happens, however, that the decay of unstable systems - atoms or particles - presents a prototypical example
of the existence of a time arrow: choosing an initial time, the decay has precisely the same behavior whatever
time direction is chosen. The problem of the arrow of time is: is there an \emph{objective} way to distinguish 
a ''future'' direction, in agreement with our general psychological perception that ''time passes''?

In \cite{Nussenzveig} it is proposed that the solution of the above-mentioned ''exponential catastrophe''
lies in the fact that the decay should be necessarily treated together
with the preparation of the state, which must have cost a finite amount of energy and have occurred at some 
finite time in the past. Our method avoids, however, the use of complex energies, and we therefore do not
find any ''exponential catastrophe''. We retain, however, Nussenzveig's suggestion as a natural and physically
compelling explanation of the assymetry between past and future, i.e., of the arrow of time, which has
been proposed in thermodynamics \cite{therm2} (see also \cite{oneortwo} for a pedagogic discussion). A similar
point of view has also been set forth by Peierls in a beautiful discussion (section 3.8, p. 73 of \cite{Peierls}).
Since the discussion is essentially identical to the one in \cite{therm2}, taking the ground state \eqref{(1.12)}
as initial state, and noting that the return probability is invariant under time-reversal by the self-adjointness
of $H_{1}$. We therefore omit it, remarking that a ``time-arrow theorem'' may be proved as a result.

\subsection{Decay without analyticity: the correction $c(t)$ to exponential decay. The main theorem and its proof}

We refer to \eqref{(1.7)} and \eqref{(1.8.1)}, \eqref{(1.8.2)} and define the functions $G$ and $F$, which will play a key role
in the following:

\begin{equation}
\label{(3.1)}
G(\lambda) \equiv g(\lambda)^{2}
\end{equation}

\begin{equation}
\label{(3.2)}
F(\lambda) \equiv vp \int_{\delta}^{\infty} \frac{G(k)}{k-\lambda} dk\\
\mbox{ for all } \delta>0 \mbox{ and } 0< \lambda< \infty
\end{equation}
where $vp$ denotes the Cauchy principal value (\cite{BB}, chapter 3.2, pg. 33). Note that for $\lambda = 0$ the principal value in \eqref{(3.2)} is not defined, but we add to \eqref{(3.2)} 
\begin{equation}
\label{(3.3)}
F(0) \equiv \lim_{\lambda \downarrow 0} F(\lambda) = \int_{0}^{\infty}\frac{G(k)}{k} dk
\end{equation}
\eqref{(3.3)} is proved in appendix A. 
By \eqref{(1.7)} and \eqref{(3.1)}, it follows that $G$ satisfies:
\begin{equation}
\label{(3.4.1)}
\sup_{k \in [0,\infty)} |G^{'}(k) (1+ k^{2})| < \infty
\end{equation}
 
The following Sokhotski-Plemelj formula (\cite{BB}, chapter 3.3, page 37) will be used:

\begin{equation}
\label{(3.4.2)}
\lim_{\epsilon \to 0} \frac{1}{x \pm i\epsilon} = \mp i\pi \delta + vp \frac{1}{x} \mbox{ in } {\cal D}^{'}(\mathbf{R})
\end{equation}
 From the proof of \eqref{(3.4.2)}, e.g., in \cite{BB}, loc.cit., it is immediately apparent that \eqref{(3.4.2)} holds as a functional on test-functions $G$ which need not belong to the Schwartz space ${\cal D}(\mathbf{R})$ but need only satisfy \eqref{(3.4.1)}. Using this fact, we obtain from \eqref{(1.21.2)}, \eqref{(1.24)}, \eqref{(3.1)} and \eqref{(3.2)} the equation
\begin{align}
\begin{split}
\label{(3.4.3)}
& \mbox{ For all } \lambda >0\\ 
& g_{\Phi_{1}}(\lambda)= \frac{d\mu_{\Phi_{1}}(\lambda)}{d\lambda} \\
& = \frac{\beta^{2}G(\lambda)}{(E_{0}-\lambda-\beta^{2}F(\lambda))^{2}+(\pi \beta^{2}G(\lambda))^{2}}
\end{split}
\end{align}  

From \eqref{(2.9)}, \eqref{(1.23.1)} and \eqref{(1.23.2)}, we obtain  

\begin{equation}
\label{(3.4.4)}
R_{\Phi_{1}}(t) = \int_{0}^{\infty} g_{\Phi_{1}}(\lambda) \exp(-i\lambda t) d\lambda
\end{equation}
\eqref{(3.4.3)}, properties a.) and b.) of $F$ in Appendix A , \eqref{(3.1)}, \eqref{(1.7)}, and \eqref{(1.8.1)} imply that 
$g_{\Phi_{1}}(\lambda)$ in \eqref{(3.4.4)} is uniformly bounded in $\lambda$ near $\lambda=0$ and of decay $O(\lambda^{-7})$ for
large $\lambda$, so that the integral on the r.h.s. of \eqref{(3.4.4)} is well defined. We may now state our main theorem:

\begin{theorem}
\label{th:2.1}

There exists a constant $b>0$ such that, if 
\begin{equation}
\label{(3.4.5)} 
\beta < b
\end{equation}
then 
\begin{equation}
\label{(3.4.6)}
R_{\Phi_{1}}(t) = (1+O(\beta^{2}))\exp(-i\lambda_{0}t)\exp(-\frac{\Gamma t}{2})+c(t)
\end{equation}
with the level shift $\lambda_{0}$ given by the unique solution in a sufficiently small neighbourhood of $E_{0}>0$ of the
equation
\begin{equation}
\label{(3.5)}
E_{0} -\lambda_{0} -\beta^{2} F(\lambda_{0})=0
\end{equation}
and the half-width $\Gamma$ is given by
\begin{equation}
\label{(3.6)}
\Gamma = 2\pi \beta^{2} G(E_{0}) 
\end{equation}
Furthermore, in \eqref{(3.4.6)}, $c(t)$ is given by
\begin{equation}
\label{(3.7)}
c(t) = c_{1}(t)+c_{2}(t)
\end{equation}
where
\begin{equation}
\label{(3.8.1)}
\lim_{t \to \infty} tc_{1}(t) = \frac{\beta^{2}d}{E_{0}}
\end{equation}
for some constant $d>0$ independent of $\beta$ and
\begin{equation}
\label{(3.8.2)}
|c_{2}(t)| \le \frac{c \beta^{2}}{t}
\end{equation}
for all $t>0$ and $c>0$ independent of $t$.

\end{theorem}

\begin{proof}

As in \cite{King}, the strategy of the proof will be to approximate $g_{\Phi_{1}}$, given by \eqref{(3.4.3)}, by a Lorentzian
(or Breit-Wigner) function: this will yield \eqref{(3.4.6)}, with \eqref{(3.7)} and $c_{2}=0$, and $c_{1}$ satisfying \eqref{(3.8.1)}.
An estimate of the remainder provides then \eqref{(3.7)}, with $c_{2}$ satisfying \eqref{(3.8.2)}.

We expand, as in \cite{King}, \eqref{(3.4.3)} around $\lambda=\lambda_{0}$ (the solution of \eqref{(3.5)} under assumption \eqref{(3.4.5)},
which exists by the implicit function theorem under our assumptions on $G$ and $F$, in particular the continuous 
differentiability of $F$ in a neighborhood of $E_{0}$) to second order. Define
\begin{equation}
\label{(3.9)}
\kappa \equiv -1-\beta^{2}F^{'}(\lambda_{0})-i\pi\beta^{2}G^{'}(\lambda_{0})
\end{equation}
where the prime indicates differentiation. Then
\begin{align}
\begin{split}
\label{(3.10)}
& \alpha(\lambda) \equiv E_{0}-\lambda-\beta^{2}F(\lambda)-i\pi\beta^{2}G(\lambda)\\
& = \kappa(\lambda-\lambda_{0})-i\pi\beta^{2}G(\lambda_{0})+w(\lambda)
\end{split}
\end{align}
where the remainder $w(\lambda)$ in \eqref{(3.10)} is equal to
\begin{align}
\begin{split}
\label{(3.11)}
w(\lambda) & =-\beta^{2}[F(\lambda)-F(\lambda_{0})-F^{'}(\lambda_{0})(\lambda-\lambda_{0})]\\
           &-i\pi\beta^{2}[G(\lambda)-G(\lambda_{0})-G^{'}(\lambda_{0})(\lambda-\lambda_{0})]
\end{split}
\end{align}
From \eqref{(3.9)},

\begin{align}
\begin{split}
\label{(3.12)}
(\kappa)^{-1} & = (-1-\beta^{2}F^{'}(\lambda_{0})-i\pi\beta^{2}G^{'}(\lambda_{0}))^{-1}\\
             & = -(1+\beta^{2}A)^{-1}
\end{split}
\end{align}
where
\begin{equation}
\label{(3.13)}
A \equiv F^{'}(\lambda_{0})+i\pi G^{'}(\lambda_{0})
\end{equation}

From \eqref{(3.12)}
\begin{equation}
\label{(3.14)}
(\kappa)^{-1} = -[1-\beta^{2}A+B(\beta^{2}A)^{2}]
\end{equation}
where
\begin{equation}
\label{(3.15)}
|B| \le 2
\end{equation}
if
\begin{equation}
\label{(3.16)}
\beta^{2}\sqrt{[F^{'}(\lambda_{0})^{2}+\pi^{2}G^{'}(\lambda_{0})^{2}]} < \frac{1}{2}
\end{equation} 

Thus, a Lorentzian (or Breit-Wigner) approximation to $g_{\Phi_{1}}$, given by \eqref{(1.23.2)} or \eqref{(1.24)}, is
\begin{equation}
\label{(3.17)}
L(\lambda) \equiv \frac{1}{\pi}\Im\frac{1}{\kappa}(\lambda-\lambda_{0}-i\pi \beta^{2}\kappa^{-1}G(\lambda_{0}))^{-1}
\end{equation}
where
\begin{equation}
\label{(3.18)}
\kappa^{-1}G(\lambda_{0})=-G(\lambda_{0})+O(\beta^{2})
\end{equation}
by \eqref{(3.14)}-\eqref{(3.16)}. By \eqref{(3.17)} and \eqref{(3.18)}, the point
\begin{equation}
\label{(3.19)}
\bar{\lambda} \equiv \lambda_{0}+i\pi \beta^{2}\kappa^{-1}G(\lambda_{0})=\lambda_{0}-i\pi \beta^{2}G(\lambda_{0})+O(\beta^{4})
\end{equation}
lies on the lower half of the complex plane. Accordingly, we write
\begin{equation}
\label{(3.20.1)}
R_{\Phi_{1}}(t) = I_{L}(t) + D_{L}(t)
\end{equation}
where
\begin{equation}
\label{(3.20.2)}
I_{L}(t) \equiv \int_{0}^{\infty} \exp(-it\lambda)L(\lambda)d\lambda
\end{equation}
and
\begin{equation}
\label{(3.20.3)}
D_{L}(t)\equiv \int_{0}^{\infty} \exp(-it\lambda)(g_{\Phi_{1}}(\lambda)-L(\lambda))d\lambda
\end{equation}
We apply Cauchy's theorem to the complex integral of
\begin{equation}
\label{(3.21)}
f(z) \equiv \exp(-itz) L(z)
\end{equation}
along the clockwise circuit $\Gamma \equiv C_{1}\cup C_{2} \cup (-C_{3})$, where $C_{1} \equiv \{iy;-R \le y \le 0\}$, $C_{2}=[0,R]$,
and $C_{3}= \{\exp(i\theta);-\frac{\pi}{2} \le \theta \le 0\}$, and let $R \to \infty$, avoiding the pole $\bar{\lambda}$. The contribution
of $C_{3}$ tends to zero due to the term $\exp(-itz)$ in \eqref{(3.20.2)} (recall that $t>0$). We now estimate that of $C_{1}$, writing
first
\begin{align}
\begin{split}
\label{(3.22)}
&L(\lambda)= \frac{1}{\pi} \Im \frac{1}{\kappa(\lambda-\lambda_{0})-i\pi \beta^{2}G(\lambda_{0})}=\\  
&= \frac{1}{2\pi i}(\frac{1}{\kappa(\lambda-\lambda_{0})-i\pi \beta^{2}G(\lambda_{0})}-\\
&-\frac{1}{\kappa(\lambda-\lambda_{0})+i\pi \beta^{2}G(\lambda_{0})})
\end{split}
\end{align}
Therefore, by \eqref{(3.20.2)},
\begin{equation}
\label{(3.23.1)}
I_{L}(t) = -2\pi i res(\bar{\lambda})-\frac{\beta^{2}G(\lambda_{0})}{t}\int_{0}^{\infty}dy \exp(-y)f(t,y)
\end{equation}
where
\begin{equation}
\label{(3.23.2)} 
f(t,y)\equiv [\kappa(\frac{-iy}{t}-\lambda_{0})-i\pi \beta^{2}G(\lambda_{0})]^{-1}[\kappa(\frac{-iy}{t}-\lambda_{0})+i\pi \beta^{2}G(\lambda_{0})]^{-1}
\end{equation}
By \eqref{(3.19)} and \eqref{(3.21)},
\begin{equation}
\label{3.24)}
res(\bar{\lambda}) = \exp(-it\lambda_{0})\exp(-\pi \beta^{2}G(\lambda_{0})t)[1+O(\beta^{2})]
\end{equation}
We have
\begin{eqnarray*}
|\kappa(\frac{-iy}{t}-\lambda_{0})-i\pi \beta^{2}G(\lambda_{0})| \\
\ge |\kappa| |\frac{-iy}{t}-\lambda_{0}| - \pi \beta^{2}G(\lambda_{0})\\
\ge ((1-O(\beta^{2}))\lambda_{0} -\pi \beta^{2}G(\lambda_{0})) \ge \lambda_{0} - O(\beta^{2})
\end{eqnarray*}
and similarly for the other denominator in \eqref{(3.23.2)}, by \eqref{(3.12)}-\eqref{(3.16)}. Hence, by \eqref{(3.23.2)}
\begin{equation}
\label{(3.25)}
|f(t,y)| \le (\lambda_{0}-O(\beta^{2}))^{-2}
\end{equation}
By \eqref{(3.23.1)}, \eqref{(3.23.2)}, \eqref{(3.25)} and the Lebesgue dominated convergence theorem, we obtain the \eqref{(3.8.1)}-part of
\eqref{(3.4.6)} of Theorem ~\ref{th:2.1}.

\qquad

We now prove that $D_{L}(t)$, defined by \eqref{(3.20.3)}, satisfies the bound
\begin{equation}
\label{(3.26.1)}
|D_{L}(t)| \le \frac {c \beta^{2}}{t} \mbox{ for all } t>0
\end{equation}
where $c$ is a constant, independent of $\beta$ and $t$. Together with \eqref{(3.20.1)}, this proves \eqref{(3.8.2)}. By definition
\eqref{(3.20.3)}, \eqref{(3.4.3)}, \eqref{(3.10)} and \eqref{(3.11)}, we find
\begin{equation}
\label{(3.27.1)}
D_{L}(t) = D_{L}^{(1)}(t) - \overline{D_{L}^{(1)}(-t)}
\end{equation}
where
\begin{equation}
\label{(3.27.2)}
D_{L}^{(1)}(t) = \frac{1}{2\pi i}\int_{0}^{\infty} \exp(-it\lambda)\frac{w(\lambda)}{\beta(\lambda)\alpha(\lambda)}d\lambda
\end{equation}
In \eqref{(3.27.1)}, the bar denotes complex conjugate. In \eqref{(3.27.2)}, $\alpha(\lambda)$ is given by \eqref{(3.10)}
and 
\begin{equation}
\label{(3.27.3)}
\beta(\lambda) \equiv \kappa(\lambda - \lambda_{0}) -i\pi \beta^{2} G(\lambda_{0})
\end{equation}
By \eqref{(3.27.1)} and \eqref{(3.27.2)}, in order to prove \eqref{(3.8.2)}, it suffices to prove
\begin{equation}
\label{(3.26.2)}
|D_{L}^{1}(t)| \le \frac {c \beta^{2}}{t} \mbox{ for all } t>0
\end{equation}
The proof of \eqref{(3.26.2)} is given in appendix A.

\end{proof}

\begin{remark}
\label{rem:2.2}

Instead of the splitting \eqref{(3.20.1)}, King \cite{King} defines (in our notation)
\begin{equation}
\label{(3.21.5)}
I_{L}(t) \equiv \int_{-\infty}^{\infty} \exp(-it\lambda) L(\lambda) d\lambda
\end{equation}
He thereby adds to $R_{\Phi_{1}}(t)$ a term
$$
I_{L}^{'}(t) \equiv \int_{-\infty}^{0} \exp(-it\lambda) L(\lambda) d\lambda
$$
By \eqref{(3.9)} and \eqref{(3.17)}, $L(\lambda)$ is $O(1)$ and not $O(\beta^{2})$. In our view, it happens that it is just the fact that $I_{L}(t)$ is given by \eqref{(3.20.2)} - and not \eqref{(3.21.5)} - which is responsible for the universal term
$c_{1}(t) = O(\beta^{2}\frac{1}{t})$ in theorem 2.1. The rest of the proof of theorem 2.1 is devoted to establishing that the (non-universal) correction to the Lorentzian term does not alter this conclusion qualitatively, as demonstrated by \eqref{(3.4.6)}, \eqref{(3.7)}, \eqref{(3.8.1)} and \eqref{(3.8.2)} of that theorem.

\end{remark}

\section{Sojourn time, its physical interpretation and a time-energy uncertainty relation}

Since $\Gamma$ is the most fundamental physical quantity characterizing decay, it would both more elegant and conceptually more advantageous to characterize it by a global quantity - i.e., not relying on pointwise estimates in the time variable, such as \eqref{(2.10.1)}.                                

 This subject has a very long history, well summarized in the introduction to the article of Gislason, Sabelli and Wood \cite{GSW}, with various important references: it is known under the general heading of ''time-energy uncertainty relation''. More recent reviews of the topic, which also added significant new results, are the articles by Brunetti and Fredenhagen \cite{BrF2} and Pfeifer and Fr\"{o}hlich \cite{PFro}, as well as the book \cite{Bus}, to which we also refer for additional references.

An initial relevant remark is that the early version of the time-energy uncertainty relation, stating that, if the energy of a system is measured during a time $\Delta t$, the corresponding uncertainty $\Delta E$ in the energy variable $E$ must satisfy $\Delta E \Delta t \ge \frac{1}{2}\hbar$, is physically untenable, because, as reviewed in the introduction to \cite{GSW}, it seems generally accepted that the energy of a system can be measured with arbitrary precision and speed. This was first pointed out by Aharonov and Bohm \cite{ABohm}. The point we wish to make is that the very designation ''time-energy uncertainty relation'' is inadequate, because the quantity multiplying $\Delta E$ in the would-be inequality is of entirely different nature from ''$\Delta t$''. Our results in this section bring a new light on this matter.

We assume a slightly more general setting than in previous sections. Let $H$ be a self-adjoint operator on a Hilbert space ${\cal H}$, and, for $\Psi \in {\cal H}$, define
\begin{equation}
\label{(4.1)}
R_{\Psi}(t) = (\Psi, \exp(-itH) \Psi)
\end{equation}
This is just the return probability amplitude for the vector $\Psi$, given by \eqref{(2.9)}. For some $\Psi_{0} \in {\cal H}$, assume that
\begin{equation}
\label{(4.2)}
R_{\Psi_{0}} \in L^{2}(-\infty,\infty)
\end{equation}
and define the \emph{sojourn time of the system in the state $\Psi_{0}$} (\cite{KBSinha1}, \cite{BDFS}) by
\begin{equation}
\label{(4.3)}
\tau_{H}(\Psi_{0}) \equiv \int_{0}^{\infty}|R_{\Psi_{0}}(t)|^{2}dt
\end{equation}
By a theorem of Sinha \cite{KBSinha1}, \eqref{(4.2)} requires that $H$ have purely absolutely continuous (a.c.) spectrum. A lower bound to the sojourn time is given by the rigorous version of the Gislason-Sabelli-Wood time-energy uncertainty relation proved in (\cite{MWB}, Theorem 3.17, page 81):

\begin{theorem}[rigorous version of the theorem of Gislason-Sabelli-Wood \cite{GSW}]
\label{th:3.1}
Let \eqref{(4.2)} hold and
\begin{equation}
\label{(4.4)}
\Psi_{0} \in D(H) \mbox{ i.e., } ||H\Psi_{0}||<\infty
\end{equation}
Then
\begin{equation}
\label{(4.5)}
I_{H}(\Psi_{0}) \equiv \tau_{H}(\Psi_{0})\Delta E \ge \frac{3\pi \sqrt(5)}{25}
\end{equation}
where
\begin{equation}
\label{4.6)}
(\Delta E)^{2} \equiv (\Psi_{0},H^{2}\Psi_{0})-(\Psi_{0},H \Psi_{0})^{2}
\end{equation}
is the energy variance (uncertainty) in the state $\Psi_{0}$.
\end{theorem}

This theorem has been applied to estimate the half-widths of negative ion resonances in \cite{deRoseGS}.

In order to assess the physical meaning of $\tau_{H}(\Psi_{0})$, let, following \cite{GSW}, 

\begin{equation}
\label{(4.7.1)}
Q(t) \equiv |R_{\Psi_{0}}(t)|^{2}
\end{equation}
denote the (quantum) probability that the system has \emph{not} decayed up to the time $t$. Then the quantity
$$
Q(t) - Q(t+\Delta t) = -Q^{'}(t)\Delta t + o(\Delta t)
$$
equals the quantum probability that the system has decayed in the interval $[t,t+\Delta t)$, and thus the \emph{average lifetime} $\tau$
of the decaying state is
\begin{equation}
\label{(4.7.2)}
\tau= -\int_{0}^{\infty}dt t Q^{'}(t)= [tQ(t)]_{0}^{\infty}+ \int_{0}^{\infty}dt Q(t)=\tau_{H}(\Psi_{0})
\end{equation}
as long as 
\begin{equation}
\label{(4.8)}
\lim_{t \to \infty} tQ(t) = 0
\end{equation}

Our main result in this section is the following theorem:

\begin{theorem}
\label{th:3.2}
For model \eqref{(1.15)}, \eqref{(4.2)}, as well as \eqref{(4.4)}, are true, if $\Psi_{0}=\Phi_{1}$, the Weisskopf-Wigner state. Moreover:
\begin{itemize}
\item [$a.)$] 
\begin{equation}
\label{(4.9)}
\Delta E \ge 0.843 \Gamma
\end{equation}
\item [$b.$]
Equation \eqref{(4.8)} holds, and therefore the time of sojourn has the interpretation of an average lifetime.
\end{itemize}
\end{theorem}

\begin{proof}

\eqref{(4.2)} follows directly from Theorem 2.1. By the spectral theorem,
\begin{equation}
\label{(4.10)}
||H \Psi_{0}||^{2} = \int_{-\infty}^{\infty} d\lambda \lambda^{2} g_{\Phi_{1}}(\lambda)
\end{equation} 
In \eqref{(3.4.3)}, by \eqref{(1.7)}, \eqref{(1.8.1)}, \eqref{(3.1)}, the numerator $G(\lambda)$ decays as $|\lambda|^{-7}$ for large $|\lambda|$, and
$$
|\frac{\lambda^{2}}{(E_{0}-\lambda-\beta^{2} F(\lambda))^{2}+(\pi \beta^{2} G(\lambda))^{2}}| \le c
$$
where the constant $c$ independs of $\lambda$ and the other parameters, by property $a.)$ of $F(\lambda)$ proved in Appendix A. Thus,
$$
\int_{-\infty}^{\infty} d\lambda \lambda^{2} g_{\Phi_{1}}(\lambda) < \infty
$$
which, together with \eqref{(4.10)}, proves \eqref{(4.4)}.

Further estimate of $\tau_{H}(\Phi_{1})$ depends on a suitable splitting of the time interval into three parts, corresponding to ``small'' $t \le t_{\epsilon}$, ``intermediate'' $t_{\epsilon} \le t \le t_{0}$, and ``large'' $t \ge t_{0}$, which we omit. The latter part concerns the correction $c(t)$ in \eqref{(3.4.6)} and yields the term
$$
\int_{t_{0}}^{\infty} dt \frac{|c|^{2} \beta^{4}}{E_{0}^{2}t^{2}} = \frac{|c|^{2}\beta^{4}}{E_{0}^{2}t_{0}}
$$
for $|c|$ of order one, this term is of order $\alpha^{8} \approx 10^{-16}$.
We further choose $t_{0}$ such that
\begin{equation}
\label{(4.17)}
\exp(-\frac{t}{2\tau}) \ge \frac{|c|\beta^{2}}{E_{0}t} \mbox{ if } t_{\epsilon}\le t \le t_{0}
\end{equation}
With these choices, it follows that
\begin{equation}
\label{(4.18)}
|\tau_{H}(\Phi_{1}) - \frac{1}{\Gamma}| \le 10^{-4} \frac{1}{\Gamma}
\end{equation}
By Theorem 2.1 and \eqref{(4.7.2)}, it follows that $Q(t)=O(\frac{1}{t^{2}})$ for large $t$, so that \eqref{(4.8)} holds, and thus b.).

\end{proof}

\begin{remark}
\label{rem:3.1}

The interest of \eqref{(4.9)} is better appreciated by realizing that the method of proof of Theorem 2.1, i.e., comparison with the Lorentzian $L(\lambda)$, fails for $\Delta E$, because the r.h.s. of \eqref{(4.10)}, when $g_{\Phi_{1}}(\lambda)$ is replaced by $L(\lambda)$, is infinite.

Further, \eqref{(4.18)} shows that the sojourn time equals indeed, to a very good approximation, the inverse half-width of the state. This is due to the apparently general fact that, both in atomic and particle physics, the Lorentzian (Breit-Wigner) approximation is excellent - as seen from \eqref{(4.17)} and the fact that, after 48 lifetimes, the atom ''has decayed for all practical purposes'', as remarked by Nussenzveig in \cite{Nussenzveig}.

\end{remark}

\begin{remark}
\label{rem:3.2}

In order that the level shift $\lambda_{0}-E_{0}$ may be measured with great precision, as is the case of the Lamb shift, it is crucial that it is of lower order than the width. It seems remarkable that this is so even in this simple model, where $\lambda_{0}-E_{0} = O(\beta^{2})= O(\alpha^{3})$, and 
$\Gamma = 2\pi \beta^{2} G(E_{0}) \approx 2\pi \beta^{2}E_{0} = O(\alpha^{3})\alpha= O(\alpha^{4})$, since $E_{0}=O(\alpha)$.

\end{remark}

\section{Conclusion}

In Theorem 2.1 we proved that positivity of the Hamiltonian $H$ implies \eqref{(3.4.6)}, with $c(t) = O(\beta^{2}\frac{1}{t})$ for sufficiently large positive times and sufficiently small coupling constant $\beta$. This correction  is universal and improves on some results
of \cite{King}. The remaining, non-universal, part of the correction is also shown to be of the same qualitative type. The method consists in approximating the matrix element of the resolvent operator in the W.W. state by a Lorentzian distribution. No use is made of complex energies associated 
to analytic continuations of the resolvent operator to ''unphysical'' Riemann sheets.

The above-mentioned correction, although very small and negligible for the computation of the half-width $\frac{1}{\Gamma}$ (theorem 3.2), plays nevertheless a basic conceptual role. It is due to the regeneration of the decaying state from the decay products, a virtual process which is of the same nature of the tunneling which plays a crucial role in the Gamow theory of alpha decay (\cite{Gamow}, \cite{BrHa}) but, unlike the latter, is characteristic of a quantum field theory (see remark 1.1).

Due to Sinha \cite{KBSinha1} and Lavine \cite{Lav1} is the concept of sojourn time $\tau_{H}(\Psi)$ given by \eqref{(4.3)}. As a functional over a particular set of elements $\Psi$ of the Hilbert space ${\cal H}$, on which the self-adjoint operator is defined (e.g., in potential theory, the set of Kato-smooth vectors, see \cite{RSIII} and \cite{Lav1}, the problem was posed by the late Pierre Duclos (see also \cite{BDFS}) of obtaining lower bounds to $\tau_{H}(\Psi)$, motivated by the expectation that, near resonances, $\tau_{H}$ assumes very large values; one lower bound was given by Lavine's form of the time-energy uncertainty relation \cite{Lav1} (see also \cite{Asch} for a new version and an improvement of Lavine's bounds), another by the rigorous form of the Gislason-Sabelli-Wood time-energy uncertainty relation, Theorem 3.1. The application to the present model (Theorem 3.2) shows that the sojourn time is the physically most natural concept describing decay, because it coincides with the average lifetime of the state, a standard concept in quantum probability.

In spite of its simplicity, the present model has some surprisingly realistic features (see, e.g., remark ~\ref{rem:3.2}). Its most unrealistic aspect is, of course, the lack of vacuum polarization, which allows us to work in Fock space and yields an unphysical conservation law, which is, however, responsible for the relatively easy estimates of the time evolution, viz., of the return probability amplitude of the Weisskopf-Wigner state. In fact, we know of no other model in which a closed form exists for the expectation value of the resolvent on a particular state, which simulates an explicit ``pole term'' in the lower half-plane as a consequence of the interaction - a fact we find remarkable.

When the ``counterrotating'' term
$$
H_{I}^{'} = \beta[\sigma_{+}\otimes a^{\dag}(g)+\sigma_{-}\otimes a(g)]
$$
is added to $H$, the above picture no longer holds, but a perturbative treatment (\cite{Dav}, see also \cite{DaNu}) is available: the final results for the Lamb shift, as well as for the line shape, are in good agreement with experiment.

Our new result may be very simply stated. The presence of a term simulating a ``pole term'' in the matrix element of the resolvent in the W.W. state allows to use Cauchy's theorem, as in \cite{King}. We do use Cauchy's theorem, but point out that, upon use of a convenient contour which takes semi-boundedness of $H$ into account (in contrast to \cite{King}), the main part of the correction to the Lorentzian arises already. This correction turns out to be of qualitatively different nature as the analogous one in potential theory, which arises from the spreading of the wave-packet, as discussed in remark 1.1.

Acknowledgements

We should like to thank the first referee for his encouraging remarks and corrections. We are also deeply indebted to the second one for important remarks and corrections, as well as a very thorough reading of the painful details of this article.

\section{Appendix A - completion of the proof of Theorem 2.1}

In this appendix we prove that (74) of Theorem 2.1 holds. Together with (71), this proves (70), and thereby completes the proof of Theorem 2.1. 

We first write (72) as the limit, as $\delta \downarrow 0$, of the corresponding integral from $\delta >0$ to $\infty$. By integration by parts on the latter, we find
\begin{eqnarray*}
D_{L}^{1}(t) = \lim_{\delta \downarrow 0}[-\frac{w(\delta)}{it\alpha(\delta)\beta(\delta)} +\\
+ \frac{\int_{\delta}^{\infty}d\lambda \exp(-it\lambda)\frac{d}{d\lambda}(\frac{w(\lambda)}{\alpha(\lambda)\beta(\lambda)})}{it}]
\end{eqnarray*}
$$\eqno{(A.1)}$$
where, for $\lambda>0$, $\alpha(\lambda)$ and $\beta(\lambda)$ are given by (51) and (73) of the main text, but we repeat them here for clarity:
$$
\alpha(\lambda) \equiv E_{0}-\lambda-\beta^{2}F(\lambda)-i\pi\beta^{2}G(\lambda)
\eqno{(A.2)}
$$
and
$$
\beta(\lambda) \equiv \kappa(\lambda-\lambda_{0})-i\pi \beta^{2}G(\lambda_{0})
\eqno{(A.3)}
$$
We have, the prime denoting, as usual, the first derivative,
$$
\alpha^{'}(\lambda) = -1-\beta^{2}F^{'}(\lambda)-i\pi \beta^{2}G^{'}(\lambda)
\eqno{(A.4)}
$$
and
$$
\beta^{'}(\lambda) = \kappa
\eqno{(A.5)}
$$
From (52),
$$
w^{'}(\lambda)=-\beta^{2}(F^{'}(\lambda)-F^{'}(\lambda_{0}))-i\pi \beta^{2}(G^{'}(\lambda)-G^{'}(\lambda_{0}))
\eqno{(A.6)}
$$
\begin{eqnarray*}
\frac{d}{d\lambda}(\frac{w(\lambda)}{\alpha(\lambda)\beta(\lambda)}) = \\
= \frac{w^{'}(\lambda)}{\alpha(\lambda)\beta(\lambda)} - \frac{w(\lambda)\alpha^{'}(\lambda)}{\alpha(\lambda)^{2}\beta(\lambda)}-\\
- \frac{w(\lambda)\beta^{'}(\lambda)}{\alpha(\lambda)\beta(\lambda)^{2}}
\end{eqnarray*}
$$\eqno{(A.7)}$$
From (21),(22),(36) and (37) we have
$$
G(\lambda) = \lambda (\lambda^{2}+a^{2})^{-4} \mbox{ for } \lambda \ge 0
\eqno{(A.8.1)}
$$
$$
G^{'}(\lambda) = (\lambda^{2}+a^{2})^{-4}-8\lambda^{2}(\lambda^{2}+a^{2})^{-5}
\eqno{(A.8.2)}
$$
When writing $f(0)$ in the following, for some function $f$, it will be meant the limit $\lim_{\delta \downarrow 0} f(\delta)$. The finiteneness of the resulting limits, for all the functions which follow, will result from (38), which will be proved later as part of the forthcoming property b.) of the function $F$. We have, then: 
$$
F(0) = \int_{0}^{\infty}(k^{2}+a^{2})^{-4}dk
\eqno{(A.8.3)}
$$
$$
G(0) = 0
\eqno{(A.8.4)}
$$
$$
w(0)= -\beta^{2}[F(0)-F(\lambda_{0})+\lambda_{0}F^{'}(\lambda_{0})]-i\pi \beta^{2}\lambda_{0}G^{'}(\lambda_{0})
\eqno{(A.8.5)}
$$
$$
\alpha(0) = E_{0}-\beta^{2}F(0)
\eqno{(A.8.6)}
$$
$$
\beta(0) = -\kappa \lambda_{0}-i\pi \beta^{2}G(\lambda_{0})
\eqno{(A.8.7)}
$$
The first term in (A.1) satisfies, in the limit $\delta \downarrow 0$, the bound on the r.h.s. of (75), by (A.8.5), (A.8.6) and (A.8.7). Therefore, by (A.1) and (A.7), in order to conclude the proof of (74), we need only prove that
$$
|\int_{0}^{\infty} \frac{\alpha^{'}(\lambda)}{\alpha(\lambda)^{2}\beta(\lambda)} w(\lambda) d\lambda|<\infty
\eqno{(A.9.1)}
$$
$$
|\int_{0}^{\infty} \frac{\beta^{'}(\lambda)}{\alpha(\lambda)\beta(\lambda)^{2}} w(\lambda) d\lambda| <\infty
\eqno{(A.9.2)}
$$
$$
|\int_{0}^{\infty} \frac{1}{\alpha(\lambda)\beta(\lambda)} w^{'}(\lambda) d\lambda| <\infty
\eqno{(A.9.3)}
$$
It follows from (A.2), (A.3), (A.4), (A.5), (A.8.1) and (A.8.2) and (52) that (A.9.1)-(A.9.3) hold if the two following assertions are true:
\begin{itemize}
\item [$a.)$]
For $\lambda$ sufficiently large, $F(\lambda)$ and $F^{'}(\lambda)$ are uniformly bounded in $\lambda$;
\item [$b.)$]
For $\lambda$ in a sufficiently small right-neighbourhood of zero, $F(\lambda)$ is uniformly bounded, (38) holds and
$$
F^{'}(\lambda) = -\log \lambda + D 
$$
where $0<D<\infty$ is independent of $\lambda$.
\end{itemize}
Indeed, $b.)$ implies that $\alpha^{'}$, as well as $w^{'}$, are integrable in a neighbourhood of zero, which suffice to prove integrability of
$\frac{\alpha^{'}(\lambda)}{\alpha(\lambda)^{2}\beta(\lambda)} w(\lambda)$ and of $\frac{1}{\alpha(\lambda)\beta(\lambda)}w^{'}(\lambda)$, in a 
neighbourhood of zero, which are elements in the proof of (A.9.1) and (A.9.3). Convergence at infinity of the integrals on the left hand sides of (A.9.1)-(A.9.3) is an immediate consequence of the explicit formulae for $\alpha$, $\beta$ and $w$, together with $a.)$.

In order to prove $a.)$ and $b.)$, we come back to (37), whereby, for any $\lambda>0$,
$$
F(\lambda) = \lim_{r\to 0}\int_{|k-\lambda|\ge r} \frac{G(k)}{k-\lambda}dk
$$
We write
\begin{eqnarray*}
\int_{|k-\lambda|\ge r} \frac{G(k)}{k-\lambda} =\\
=\int_{0}^{\lambda-r} \frac{G(k)}{k-\lambda}dk + \int_{\lambda+r}^{2\lambda} \frac{G(k)}{k-\lambda}+\\
+\int_{2\lambda}^{\infty} \frac{G(k)}{k-\lambda}dk
\end{eqnarray*}
but
\begin{eqnarray*}
\int_{0}^{\lambda-r} \frac{G(k)}{k-\lambda}dk + \int_{\lambda+r}^{2\lambda} \frac{G(k)}{k-\lambda} =\\
= \int_{r}^{\lambda} \frac{1}{k}[G(k+\lambda)-G(k-\lambda)]dk
\end{eqnarray*}
Write
$$
G(k+\lambda)-G(k-\lambda) = k\int_{-1}^{1}dt G^{'}(\lambda+kt)
$$
Thus,
\begin{eqnarray*}
F(\lambda) = \int_{0}^{\lambda}dk\int_{-1}^{1}dt\{[(\lambda+kt)^{2}+a^{2}]^{-4}\\
-8(\lambda+kt)^{2}[(\lambda+kt)^{2}+a^{2}]^{-5}\}+\\
+ \int_{2\lambda}^{\infty} \frac{G(k)}{k-\lambda}dk
\end{eqnarray*}$$\eqno{(A.11)}$$
We write
\begin{eqnarray*}
F(\lambda) = -7\int_{0}^{\lambda}dk\int_{-1}^{1}dt[(\lambda+kt)^{2}+a^{2}]^{-4}+\\
+8a^{2} \int_{0}^{\lambda}dk\int_{-1}^{1}dt [(\lambda+kt)^{2}+a^{2}]^{-5}+\\
+ \int_{2\lambda}^{\infty} \frac{G(k)}{k-\lambda}dk
\end{eqnarray*}
from which
\begin{eqnarray*}
F^{'}(\lambda) = -7\int_{-1}^{1}dt[\lambda^{2}(1+t)^{2}+a^{2}]^{-4} +\\
+ 8a^{2} \int_{-1}^{1} dt [\lambda^{2}(1+t)^{2}+a^{2}]^{-5}+\\
+ 28 \int_{0}^{\lambda}dk\int_{-1}^{1} [(\lambda+kt)^{2}+a^{2}]^{-5}2(\lambda+kt)\\
-40a^{2} \int_{0}^{\lambda}dk\int_{-1}^{1}dt [(\lambda+kt)^{2}+a^{2}]^{-6}2(\lambda+kt)\\
-2 \int_{2\lambda}^{\infty} \frac{G(k)}{k-\lambda}dk - \int_{2\lambda}^{\infty} \frac{G(k)}{(k-\lambda)^{2}}dk
\end{eqnarray*}$$\eqno{(A.12)}$$
By (A.11), we obtain directly $a.)$ for $F(\lambda)$, as well as the statements in b.) which concern $F(\lambda)$. Statement $b.)$ for $F^{'}(\lambda)$ follows from (A.8.1) and the last term in (A.12). Statement a.) for $F^{'}(\lambda)$ is not entirely obvious from (A.12), but we use 
$$
b(\lambda+kt) \le (\lambda+kt)^{2} +a^{2}
$$
which is true for $b$ sufficiently small, to bound the third and fourth terms in (A.12) in absolute value by
$$
\mbox{ const. } \int_{0}^{\lambda} dk((\lambda-k)^{2}+a^{2})^{-4} \mbox{ resp. const. } \int_{0}^{\lambda}dk ((\lambda-k)^{2}+a^{2})^{-5}
$$
which are trivially seen to be uniformly bounded in $\lambda$ by a change of variable. This completes the proof of (74). q.e.d.

\end{document}